\documentclass[11pt,a4paper]{article}

\usepackage[T1]{fontenc}
\usepackage{lmodern}
\usepackage[utf8]{inputenc}
\usepackage[english]{babel}
\usepackage{amsmath,amssymb,amsthm}
\usepackage[margin=2cm,bottom=2.4cm]{geometry}
\usepackage{array}
\usepackage{microtype}
\usepackage[hidelinks]{hyperref}
\usepackage{fancyhdr}

\newcommand{\sem}[1]{[\![#1]\!]}
\newcommand{\Aapp}{A_{\mathrm{app}}}
\newcommand{\Aex}{A_{\mathrm{ex}}}
\newcommand{\Afin}{A_{\mathrm{fin}}}
\newcommand{\Q}{\mathbb{Q}}
\newcommand{\R}{\mathbb{R}}
\newcommand{\N}{\mathbb{N}}
\newcommand{\Z}{\mathbb{Z}}
\newcommand{\SQ}{\Sigma_{\Q}}

\theoremstyle{definition}
\newtheorem{definition}{Definition}
\newtheorem{lemma}{Lemma}
\newtheorem{theorem}{Theorem}
\newtheorem{corollary}{Corollary}
\newtheorem*{theoremx}{Theorem}

\begin{document}

\begin{center}
{\LARGE\bfseries Algorithmically Presented Numbers and Canonical\\[2pt]
Representations in Cryptographic Protocols\par}

\vspace{10pt}
{\large Arslan Br\"omme\footnote{Dipl.-Inform., B.Sc., CISSP, CISA, CISM, CAISE.
\texttt{arslanb@chain-horizon.com}}\par}
\vspace{2pt}
Chain Horizon GmbH\par
\vspace{2pt}
Version 0.9.5.20 (Working Draft)\par

\vspace{8pt}
\fbox{\parbox{0.86\textwidth}{\centering\itshape
Preprint / working paper. This version is a work in progress and may be updated.
Comments are welcome. This document is an English translation of the corresponding
German version (v0.9.5.17) of this paper.}}
\end{center}

\vspace{6pt}
\noindent\textbf{Abstract}

\medskip
\noindent
This paper develops a representation-theoretic perspective on cryptographic
protocols. The focus is not solely on the computability of the abstract value as
an extensional property, but on the algorithmic structure of its presentation in
a representation system: for operational use in protocols, algorithmic
accessibility of the value does not suffice; in addition, its fixed presentation
is decisive. We distinguish three representation-theoretic notions ---
algorithmically approximable ($\Aapp$, the computable real numbers), finitely
exactly describable in a system ($\Afin(S)$), and canonical normalizability of a
system --- and show that there is no computable extensional canonicalizer that
uniformly transforms arbitrary approximation programs of computable real numbers
into unique finite value encodings. As the operational rational core
presentation we use the rational system with its canonical encoding
specification $\SQ$ (the fixed rules for valid fraction descriptions, canonical
codes, and normalization); the associated value set is $\Aex = \Q$. The notion
of a canonically serializable object class transfers this core idea to practical
protocol objects (concrete files as byte sequences, hash values, transaction
IDs, and normatively serialized payloads). We illustrate the consequences for
interoperability, well-definedness, and verification with fully worked-out toy
examples from symmetric and asymmetric encryption as well as hashing, and with a
real-world application example, the snaproot hash-anchoring protocol for
blockchain-based file integrity verification. The paper thereby shows that the
mathematical determinacy of a value and its operational uniqueness as a protocol
object are two different requirements. Once a normative representation
specification has been fixed, byte-level correctness and well-definedness
arguments can be carried out without further implementation-dependent
serialization or rounding decisions.

\medskip
\noindent\textbf{Keywords}\,---\, computable real numbers, canonical
serialization, representation theory, cryptographic protocols, canonicalization
barrier, computable analysis, protocol interoperability

\section{Introduction and Motivation}

Real numbers are central objects of analysis and of many mathematical models. In
cryptography and theoretical computer science, however, there is a fundamental
tension between mathematical uniqueness and algorithmic reproducibility: real
numbers are uniquely defined as mathematical objects, but in general possess no
finite, canonical representation. Cryptographic protocols, by contrast, process
finite byte sequences. If these byte sequences are understood as representations
of semantic objects, then a uniquely specified encoding is helpful for
representation-invariant processing. If semantically equal objects are to
produce identical protocol bytes independently of their initial representation,
a canonical encoding or normalization is additionally required; we call this
stronger requirement \emph{byte-input invariance}.

Since this paper connects notions from computable analysis and from
cryptography, we first fix some basic terminology. A \emph{cryptographic
primitive} is a basic building block such as a hash function, an encryption
scheme, or a signature scheme. By a \emph{protocol} we mean a normative
specification of inputs, representations, and algorithmic processing steps. A
\emph{semantic object} is the abstract entity meant by a description (for
instance a number, a file, or a metadata record). \emph{Extensional} means:
depending only on the represented value, not on the concrete description or the
program text; \emph{intensional} or syntactic, conversely, is our term for
properties that depend on the chosen description, notation, or program.
Presentations are, in this sense, intensional or syntactic objects; the value
they denote constitutes the extensional level. \emph{Representation-invariant}
means: semantically equal presentations lead to the same specified operational
result. The stronger requirement that semantically equal presentations are
mapped to the same operational bit or byte sequence is what we call
\emph{byte-input invariance}. Byte-input invariance requires a canonical
encoding (or an equivalent normalization); general representation invariance can
also be achieved by directly extensional processing of the decoded value.
\emph{Uniform} means a single procedure that works for all inputs of a class (as
opposed to the mere existence of individual solutions for each input). An
\emph{object class} denotes, in what follows, a fixed class of semantic objects
to be encoded in a uniform manner, such as rational numbers, files as byte
sequences, hash values, transaction IDs, or structured payloads;
\emph{protocol objects} are those semantic or serialized objects that are
operationally processed in a protocol, such as key parameters, files, hash
values, transaction IDs, or payloads.

This paper addresses this discrepancy by distinguishing different forms of
algorithmic accessibility of real values and of their presentations. The goal is
not the introduction of new cryptographic primitives, but a structural
clarification of which classes of numbers are suitable as operational protocol
parameters and which occur primarily in analysis, modeling, and error
estimation.

The cryptographic literature implicitly assumes that keys, hash inputs, and
protocol parameters are given as well-defined, finitely representable objects.
The question of which classes of numbers themselves satisfy this property is,
however, rarely analyzed explicitly. In particular, it usually remains unstated
that real numbers, while uniquely defined as mathematical objects, possess no
canonical algorithmic representation without normative specifications. This
paper makes this implicit assumption explicit and examines the consequences for
cryptographic protocols. ``Canonical'' here denotes a uniquely fixed,
algorithmically reproducible representation which, once the encoding has been
fixed, requires no further implementation-dependent additional decisions.

The contribution of this paper does not lie in a new undecidability result ---
the notion of computable real numbers, the undecidability of their equality, and
the establishment of canonical serialization in cryptographic standards are
classical --- but in the systematic transfer of known representation-theoretic
limits to the specification of cryptographic protocol objects: in the joint
framing of computable analysis and cryptographic serialization, in the
distinction between analysis values and operational protocol representations,
and in the didactic illustration by means of encryption, hashing, hash
anchoring, and commitment semantics.

Concretely, the paper makes three contributions. First, it sharpens the notion
of an \emph{algorithmically presented number} for the context of cryptographic
protocol objects by explicitly separating the real value from its effective
presentation. Second, it formulates the classical undecidability of equality of
computable real numbers as a \emph{canonicalization barrier} for a universal
extensional value encoding (Theorem~1) and condenses its design consequence into
a design trilemma. Third, it transfers this perspective to cryptographic
protocols and proposes a taxonomy of operational representations that
distinguishes between exact byte identity, fixed serialization, and canonical
value presentation.

Figure~\ref{fig:pipeline} summarizes this perspective schematically: between the
abstract value, its description, a normal form, and the operational bit/byte
encoding lie separate representation decisions in each case; only the bit/byte
encoding constitutes the immediate input of the cryptographic protocol step.

\begin{figure}[ht]
\centering
\small
\setlength{\fboxsep}{5pt}
\fbox{\parbox{1.85cm}{\centering abstract\\ value}}
$\;\rightarrow\;$
\fbox{\parbox{2.1cm}{\centering description}}
$\;\rightarrow\;$
\fbox{\parbox{1.9cm}{\centering normal\\ form}}
$\;\rightarrow\;$
{\setlength{\fboxrule}{1.2pt}\fbox{\parbox{2.6cm}{\centering\bfseries operational\\ bit/byte\\ encoding}}}
$\;\rightarrow\;$
\fbox{\parbox{2.7cm}{\centering cryptographic\\ protocol step}}
\caption{From the abstract value via the operational representation to
cryptographic processing. The highlighted bit/byte encoding already \emph{is}
the operational protocol object; the cryptographic protocol step processes it
but is not a further form of representation. At each upstream stage, missing
uniqueness, missing computability, or divergent encoding rules can lead to
different protocol bytes.}
\label{fig:pipeline}
\end{figure}

\section{Related Work}

The notion of computable real numbers goes back to Turing~\cite{turing1937}, who
gave a fundamental formal definition; it forms the foundation of modern
computability theory. A systematic treatment of effective analysis and of
representations of real numbers can be found in Weihrauch~\cite{weihrauch2000}.
Pour-El and Richards~\cite{pourel1989} study examples in which computable input
data lead to non-computable solutions of analytic problems, in particular for
differential equations; this underlines the strict inclusion
$\Aapp \subset \R$.

In constructive analysis~\cite{bishop1985}, mathematical existence is tied to
explicit construction procedures. The present paper takes up this idea but
focuses explicitly on canonical representability for algorithmic and
cryptographic protocols. The notion of a representation system used here is
deliberately more special and tailored to decidable languages of finite protocol
descriptions; it does not replace the more general theory of represented spaces
but isolates the subquestion relevant for canonical byte encodings. Modern
surveys of the general theory of represented spaces can be found, for instance,
in Pauly~\cite{pauly2016}; the computability of translations between
representations is there itself a separate, often non-trivial
problem~\cite{paulysteinberg2018} --- we deliberately consider the narrower case
of finite canonical codes. Practical systems of verified exact real
arithmetic~\cite{konecny2020} moreover show that real values can be processed
operationally via effective names and approximation procedures; this, however,
does not solve the question studied here of a unique finite value encoding.

Cryptographic primitives are traditionally defined over discrete, exactly
representable structures~\cite{goldreich, katzlindell}. Hash functions are
formalized as mappings of bit strings~\cite{bellarerogaway1993, fips180-4},
which emphasizes the importance of unique input representations.

The practical necessity of deterministic serialization is documented in detail
in individual standards: RFC~8785 (JSON Canonicalization
Scheme)~\cite{rfc8785} explicitly motivates canonical JSON by the fact that hash
and signature schemes require an invariant data format; RFC~8949
(CBOR)~\cite{rfc8949} gives rules for deterministic binary encoding, including
the stipulation of which number representation is to be used; ASN.1/DER~\cite{x690}
is an established canonical encoding rule. Less frequently, however, is this
necessity explicitly connected with the representation theory of computable real
numbers~\cite{turing1937, weihrauch2000}. It is precisely this
representation-theoretic perspective --- at the interface of
computability~\cite{turing1937, weihrauch2000} and the bit-string nature of
cryptographic primitives~\cite{goldreich, katzlindell} --- that the present
paper adopts.

\section{Algorithmically Presented Numbers}

\subsection{Basic Assumptions}

The following basic assumptions apply to the approximation, evaluation, and
normalization algorithms considered in Sections~3 and~4, on which the
computability statements rest. We use a model-independent notion of
deterministic algorithms. For a fixed approximation algorithm $M_x$ (which
approximates the value $x$), $n$ is the only input and precision parameter; the
value-dependent information is contained in the program code of $M_x$. In a
representation system, by contrast, $\mathrm{Eval}(d, n)$ explicitly relies on a
value-carrying finite description $d$. Oracles, randomness, and external system
states are not part of these computability statements; real cryptographic
protocols may use them in addition (for instance external blockchain states or
randomized commitments, cf.\ Sections~7.5 and~11).

\paragraph{Bit and byte convention.}
Finite bit strings $\{0,1\}^*$ serve throughout as the abstract encoding model.
Concrete serializations we understand as byte sequences over
$\{0,\dots,255\}^*$; via the usual eight-bit representation these can in turn be
modeled as bit strings. Where we formally write $\{0,1\}^*$, concrete byte or
ASCII encodings are thus included as a special case.

\subsection{Representation Systems}

The central notion of this paper is not a set of numbers but the manner in which
a real value is described. The computability of a value is an extensional
property of the value itself; its operational suitability for protocols, by
contrast, depends on the chosen presentation. The focus therefore lies on the
algorithmic structure of the presentation, not solely on the computability of
the value.

\paragraph{Terminology.}
In this paper, ``presentation'' (or ``description'') denotes the concrete finite
description $d$ itself by which a value is given; the \emph{algorithmically
presented number}, by contrast, is the pair $(\sem{d}, d)$. Here $\sem{d}$
denotes the semantic value represented by the description $d$ (the double
brackets read as ``the value denoted by $d$''). ``Representation'' or
``encoding'', by contrast, denotes the operational mapping of a semantic object
to a fixed form or byte sequence. A \emph{canonical representation} is such an
encoding with exactly one valid form per object.

\begin{definition}[Representation system]\label{def:repsys}
A representation system is a pair $S = (L, \mathrm{Eval})$, where
$L \subseteq \{0,1\}^*$ is a decidable language of finite descriptions and
$\mathrm{Eval}$ is a total algorithm (total meaning here: $\mathrm{Eval}$
terminates for all inputs of its specified domain, hence for all $d \in L$ and
$n \in \N$) that assigns to each $d \in L$ and $n \in \N$ a rational number
$\mathrm{Eval}(d, n)$ such that a uniquely determined value $\sem{d} \in \R$
exists with $|\mathrm{Eval}(d, n) - \sem{d}| < 2^{-n}$ for all $n$.
\end{definition}

\begin{definition}[Algorithmically presented number]\label{def:algnum}
An algorithmically presented number in $S$ is a pair $(\sem{d}, d)$ with
$d \in L$. Its value is $\sem{d} \in \R$, its presentation is $d$. The same real
value can be canonical and protocol-ready in one system, accessible only via
approximations in another, or represented by several semantically equivalent
descriptions.
\end{definition}

\paragraph{Notational convention.}
Where no confusion is possible, we write the value for short (e.g.\
``$k = 13/37$'') and understand the associated presentation as given along with
it.

\subsection{Three Representation-Theoretic Notions}\label{sec:threenotions}

We distinguish three interrelated representation-theoretic notions that are
located on different levels: a property of values (approximability), a relation
between a value and a representation system (finite describability), and a
property of an entire representation system (canonical normalizability). They
are not ordered within a single space of properties.

\paragraph{(i) Approximability.}
A value $x \in \R$ is \emph{algorithmically approximable} if there exists an
algorithm $M$ with $M(n) \in \Q$ and $|M(n) - x| < 2^{-n}$. The class $\Aapp$ of
these values is exactly that of the computable real numbers
(Turing~\cite{turing1937}, Weihrauch~\cite{weihrauch2000}); convergence is
effectively controlled by $n$. Such an algorithm $M$, which outputs for every
$n \in \N$ a rational approximation $M(n)$ with $|M(n) - x| < 2^{-n}$, we call a
\emph{fast rational Cauchy name} of the real value $x$.

\paragraph{(ii) Finite exact describability.}
A value $x$ is \emph{finitely exactly describable} in $S$ if there exists a
$d \in L$ with $\sem{d} = x$. The class is called $\Afin(S)$. It depends on the
system and can substantially exceed $\Q$ --- for instance, an algebraic system
(description = minimal polynomial and isolating interval~\cite{basupollackroy})
contains values such as $\sqrt{2}$. For a canonical representation, the
polynomial would additionally have to be normalized and the isolating interval
chosen according to a unique rule; a minimal polynomial together with an
arbitrary isolating interval is only an exact, not yet a canonical, description.

\paragraph{(iii) Canonical normalizability of the system.}
The system $S$ is called \emph{canonically normalizable} if there exists a total
computable algorithm $\mathrm{nf}\colon L \to L$ with
$\sem{\mathrm{nf}(d)} = \sem{d}$ and
$\mathrm{nf}(d_1) = \mathrm{nf}(d_2) \Leftrightarrow \sem{d_1} = \sem{d_2}$.
This is a property of the entire system, not of a subclass within $S$: if $S$ is
canonically normalizable, then every value represented by $S$ possesses a
computably determinable unique normal form.

\medskip
The three notions are genuinely distinct because they lie on different logical
levels: approximability is a property of values, finite describability a
relation between a value and a representation system, canonical normalizability
a property of representation systems. They therefore do not form nested
subclasses of one common system. A representation system in the sense of
Definition~\ref{def:repsys} has a decidable $L$ and a total $\mathrm{Eval}$; it
thus describes a uniformly computable family of real numbers. The totality of
all approximation programs, by contrast, does not form a representation system
in this sense: the set of programs that are total and satisfy the Cauchy
condition is not decidable; this follows from classical index-set results of
recursion theory (for instance Rice-style undecidability
theorems)~\cite{soare2016} or from halting-problem reductions; cf.\
\cite{weihrauch2000} for the computable-analysis context. We therefore treat
this totality as an external, non-decidable name space --- it assigns to every
computable real number at least one index, but is itself not a system with a
decidable description language. The negative statement of Section~4 concerns
exactly this name space: there is no computable extensional canonicalization of
arbitrary approximation programs.

\subsection{Canonical Encoding and the Operational Core Class}\label{sec:coreclass}

For protocols that require identical canonical bytes for semantically equal
objects, property~(iii) is the decisive system property, supplemented by the
requirement that the normal form is uniquely given in bytes. We capture this as
a canonical encoding.

\begin{definition}[Canonical encoding]\label{def:canenc}
A canonical encoding of an object class $K$ consists of a decidable language of
valid codes $C \subseteq \{0,1\}^*$, an encoder
$\mathrm{enc}\colon K \to C$ and a decoder $\mathrm{dec}\colon C \to K$, both
total computable with respect to the fixed representations, such that
$\mathrm{dec}(\mathrm{enc}(x)) = x$ holds for all $x \in K$ and, for each $x$,
the only valid representation is $\mathrm{enc}(x)$, i.e.\
$\mathrm{dec}^{-1}(\{x\}) = \{\mathrm{enc}(x)\}$. Canonicity is always to be
understood relative to a normatively fixed encoding specification.
\end{definition}

The computability of encoder and decoder is here always to be understood
relative to fixed effective input and output representations of the object
classes involved: an algorithm does not receive an abstract object directly, but
already a description of it (for instance a syntactically valid fraction for an
element of $\Q$).

\paragraph{Deterministic versus canonical.}
This distinction is central: a \emph{deterministic} serialization produces,
for a fixed input, reproducibly the same output. \emph{Canonicity} additionally
demands that every semantic object possesses exactly one valid encoding within
the specification. An encoder may, for instance, always output ``1/2''
(deterministic) while the associated decoder also accepts ``2/4'' --- the
procedure is then deterministic, but the code language is not canonical. The
expression $\mathrm{dec}^{-1}(\{x\})$ here denotes the set of all valid codes
that the decoder interprets as $x$; canonicity demands that this set is a
singleton for every $x$. That this distinction is practically relevant is shown,
for instance, by Base64: several distinct encodings can decode to the same data
and thereby break string uniqueness~\cite{chalkias2022}.

The last condition is the actual substance: it excludes that several valid byte
sequences denote the same object --- a problem that the mere function property
of $\mathrm{enc}$ does not capture. For $\Q$, one separates two languages for
this purpose: the language of all syntactically valid fraction descriptions
$L_{\Q} = \{\, p/q : p \in \Z,\ q \in \Z\setminus\{0\} \,\}$ and the language of
canonical codes
$C_{\Q} = \{\, p/q : q > 0,\ \gcd(|p|, q) = 1 \,\}$. The normalization
$\mathrm{nf}_{\Q}\colon L_{\Q} \to C_{\Q}$ is the computable reduction to lowest
terms with sign normalization; it satisfies
$\mathrm{nf}_{\Q}(\text{``1/2''}) = \mathrm{nf}_{\Q}(\text{``2/4''}) =
\mathrm{nf}_{\Q}(\text{``50/100''}) = \text{``1/2''}$, i.e.\ it maps all
descriptions of the same value to the same code. For $C_{\Q} \subseteq
\{0,1\}^*$ to actually fix a unique byte sequence per fraction
(Definition~\ref{def:canenc}), the lexical rules belong to the specification:
ASCII encoding, exactly one slash, the ASCII minus only as the sign of $p$, no
leading plus, no leading zeros (except for the digit 0 itself), no spaces,
positive denominator. For the rational value $r = p/q$ with $q > 0$ and
$\gcd(|p|, q) = 1$, concretely:
$\mathrm{enc}_{\Q}(r) = \mathrm{ASCII}(\mathrm{int}(p), \text{``/''},
\mathrm{nat}(q))$, where $\mathrm{int}$ denotes the normatively fixed decimal
syntax for integers with an optional minus sign and $\mathrm{nat}$ the
normatively fixed decimal syntax for positive natural numbers without a sign,
and $\mathrm{dec}_{\Q}(\text{``p/q''}) = p/q$ --- on the left stands a code (a
string), on the right the denoted rational number; then
$\mathrm{dec}_{\Q}(\mathrm{enc}_{\Q}(r)) = r$ and
$\mathrm{dec}_{\Q}^{-1}(\{r\}) = \{\mathrm{enc}_{\Q}(r)\}$. The injectivity
lemma in Section~4 merely shows that the encoder must be injective; that
different implementations choose \emph{the same} bytes requires the previously
jointly fixed specification $(C_{\Q}, \mathrm{enc}, \mathrm{dec})$.

The rational system $S_{\Q}$ with the specification $\SQ$ (valid descriptions
$L_{\Q}$, canonical codes $C_{\Q}$, normalization $\mathrm{nf}_{\Q}$) is
canonically normalizable. The set of its represented values is $\Q$. This
rational core presentation, the pair $(\Q, \SQ)$, we call for brevity the
\emph{operational core class}; its value set is $\Aex = \Q$
($= \Afin(S_{\Q})$). It is not
membership in $\Q$ but membership in the fixed presentation $(\Q, \SQ)$ that
guarantees the unique encoding --- the rational number $1/2$ could otherwise
appear as $2/4$, as a decimal number, or as an IEEE value. $\Aex = \Q$ is here a
deliberately chosen operational rational core class and \emph{not} the totality
of all finitely exactly describable or canonically presentable real numbers
($\sqrt{2}$, too, possesses suitable exact finite descriptions, cf.\
Section~\ref{sec:threenotions}(ii)). The operational uniqueness follows not from
mere membership in $\Q$ but from the pair $(\Q, \SQ)$.

The term ``exact'' here carries two meanings that must be distinguished: in
``finitely exactly describable'' (Section~\ref{sec:threenotions}(ii)) it merely
means that the value is described without approximation error --- an exact
description is not yet a canonical one. In the notation $\Aex$, by contrast, it
refers to the canonical determinability of the presentation. In both cases it
denotes no special status of the mathematical value. $\Q$ is also not the
\emph{simplest} canonically normalizable value set --- $\Z$, $\N$, or the dyadic
fractions are simpler, whereas algebraic systems (minimal polynomial and
isolating interval) are larger. Rather, $\Q$ is the smallest subfield of $\R$
and thus closed under addition, subtraction, multiplication, and division by
nonzero elements: the smallest operational core class within $\R$ with a
complete field structure. This motivates the choice $\Aex = \Q$ for arithmetic
protocol parameters.

The class boundary concerns uniform determinability, not the nameability of
individual values: a number such as $\pi$ can very well occur as an operational
parameter if a protocol defines the fixed identifier ``PI'' as a normative
symbol. What is impossible is merely to transform an arbitrary computable real
number, given solely by an arbitrary approximation procedure, uniformly into a
unique finite value encoding.

Terminologically, we distinguish \emph{normalization} and
\emph{canonicalization} as follows: a normalizer $\mathrm{nf}\colon L \to L$
works within a fixed representation system and maps valid descriptions to a
normal form of the same system; a canonicalizer denotes, more generally, a
procedure that maps presentations or names of a value to a unique
value-dependent code. In the rational system, the two perspectives practically
coincide; for arbitrary approximation programs of computable real numbers,
however, such a universal extensional canonicalization fails due to the
undecidability of value equality (Section~4).

Table~\ref{tab:notions} summarizes the representation-theoretic notions
introduced above and assigns them to their respective logical levels.

\begin{table}[ht]
\centering
\small
\begin{tabular}{|p{4.2cm}|p{3.2cm}|p{7.2cm}|}
\hline
\textbf{Notion} & \textbf{Level} & \textbf{Core statement} \\ \hline
$\Aapp$ & value & effectively rationally approximable \\ \hline
$\Afin(S)$ & value relative to $S$ & possesses a finite exact description in $S$ \\ \hline
canonically normalizable $S$ & system & value-equal descriptions receive the same normal form \\ \hline
$\Aex = \Q$ & value set & rational operational core class \\ \hline
$\SQ$ & specification & normative encoding and normalization rules for rational numbers \\ \hline
$(\Q, \SQ)$ & presented structure & rational values together with their normative encoding \\ \hline
canonically serializable object class & protocol level & exactly one valid encoding per object (Section~\ref{sec:serializable}) \\ \hline
\end{tabular}
\caption{The central representation-theoretic notions and their levels. The
notions deliberately lie on different logical levels and do not form a simple
hierarchy of sets of numbers.}
\label{tab:notions}
\end{table}

\subsection{Canonically Serializable Object Classes and Structural Properties}\label{sec:serializable}

\begin{definition}[Canonically serializable object class]\label{def:serializable}
An object class $K$ is called \emph{canonically serializable} with respect to a
specification $\Sigma$ if $\Sigma$ fixes a canonical encoding
$(C, \mathrm{enc}_{\Sigma}, \mathrm{dec}_{\Sigma})$
(Definition~\ref{def:canenc}) for all elements of $K$. The object $o \in K$ is
then given in \emph{$\Sigma$-conformant representation} by the code
$\mathrm{enc}_{\Sigma}(o)$. To avoid misunderstandings, we formally use
exclusively the term ``canonically serializable object class''; the informal
shorthand ``$\Aex$-compatible'' means nothing other than ``canonically
serializable with respect to a named specification'', and the analogy to $\Aex$
consists merely in the fact that such object classes, too, are encoded finitely,
uniquely, and reproducibly under a specification. The requirement concerns an
entire class under a fixed $(C, \mathrm{enc}_{\Sigma}, \mathrm{dec}_{\Sigma})$,
not a single object --- for a single object, canonical encoding would be
trivially satisfiable.
\end{definition}

Canonically serializable classes need not themselves be real numbers: concrete
files as byte sequences, hash values, transaction IDs, and normatively
serialized payloads share with $\Aex$ the property of canonical, finite,
reproducible representability under a fixed specification, provided that this
specification fixes a unique encoding per object. This transfers the operational
core idea from numbers to arbitrary protocol objects.

For the value classes, $\Aex = \Q \subsetneq \Aapp \subsetneq \R$ holds; both
inclusions are strict: $\pi \in \Aapp$ but $\pi \notin \Q$ ($\pi$ is computably
approximable but not rational), and non-computable real numbers lie in $\R$ but
not in $\Aapp$. On $\Aapp$ we consider the arithmetic operations
$(+, -, \times)$ inherited from $\R$ and division by nonzero elements; the usual
laws of arithmetic are inherited from $\R$, and what is non-trivial in each case
is closure.

For orientation: familiar constants such as $\pi$, $e$, or $\sqrt{2}$ are
computably approximable; non-computable real numbers can be constructed, for
instance, via binary sequences that encode an undecidable problem such as the
halting problem.

\paragraph{Remark (countability).}
Every algorithm possesses a finite description over a finite alphabet, and the
set of these descriptions is countable. A \emph{machine index} here denotes the
number of a program, or of a Turing machine, in a fixed effective numbering; in
what follows we use ``index'' as short for such a machine index. Since at least
one finite machine index is assigned to every computable real number, $\Aapp$ is
countable as the image of a countable set; several indices may describe the same
number.

\section{Limits of Canonical Normalization}\label{sec:limits}

Representation systems permit a precise statement about which presentations
qualify as operational parameters of executable protocols. The core is a limit:
there is no computable extensional canonicalization that uniformly transforms
arbitrary approximation programs of computable real numbers into unique finite
value encodings. It follows that protocols which are to map semantically equal
values to identical protocol bytes independently of their initial representation
must rely on a previously fixed, canonically normalizable presentation.

\begin{lemma}[Undecidability of equality on $\Aapp$]\label{lem:undecid}
There is no algorithm that decides, for two arbitrary elements
$x, y \in \Aapp$ given by approximating algorithms $M_x$, $M_y$, whether
$x = y$. This result is classical~\cite{weihrauch2000}; we sketch the proof for
self-containedness.
\end{lemma}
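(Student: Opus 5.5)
I will argue by reduction from the halting problem. Suppose, for contradiction, that there is an algorithm $E$ which, given two fast rational Cauchy names $M_x, M_y$ (Section~\ref{sec:threenotions}(i)), always terminates and decides whether $x = y$. I will then effectively map each machine index $e$ (with empty input) to a fast rational Cauchy name $M_e$ of a real number $x_e$. The map will satisfy $x_e = 0$ exactly when machine $e$ never halts. Comparing $M_e$ with the constant name $M_0(n) = 0$ via $E$ would then decide the complement of the halting problem, hence the halting problem itself. That is impossible, so no such $E$ exists.

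\textbf{The construction.} Define $M_e(n)$ as follows: simulate machine $e$ for $n$ steps. If it has halted by some step $k \le n$, output $2^{-k}$; otherwise output $0$. The map $e \mapsto M_e$ is computable, for instance by the s-m-n theorem or by direct program construction. Each $M_e$ is total, since it runs only a bounded simulation. Now set $x_e = 2^{-k}$ if $e$ halts at step $k$, and $x_e = 0$ if it never halts. Both values are rational, so in particular $x_e \in \Aapp$.

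\textbf{Verifying the Cauchy condition.} This is the step that needs care. We need $|M_e(n) - x_e| < 2^{-n}$ for all $n$, with strict inequality as in the paper's definition.
\begin{itemize}
\item If $e$ never halts, then $M_e(n) = 0 = x_e$, so the error is $0$.
\item If $e$ halts at step $k$ and $n \ge k$, then $M_e(n) = x_e$ exactly.
\item If $e$ halts at step $k$ and $n < k$, then $M_e(n) = 0$, so the error is $2^{-k}$. Since $k \ge n+1$, this gives $2^{-k} \le 2^{-n-1} < 2^{-n}$.
\end{itemize}
So the strict bound holds in every case. To keep the indexing clean, I will count halting steps starting from $1$.

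\textbf{Closing the argument.} Apply $E$ to the pair $(M_e, M_0)$. It returns ``equal'' exactly when $e$ does not halt, so $E$ would decide the halting problem, contradicting Turing's theorem~\cite{turing1937}. One subtlety must be addressed: the hypothetical decider receives the approximating algorithms as program indices, not as oracles for the values. I expect the main obstacle to be stating precisely that the reduction produces these indices uniformly and computably. That is exactly what the s-m-n theorem provides, and it is what makes the argument a valid reduction.
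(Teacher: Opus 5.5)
Your proposal is correct and follows essentially the same route as the paper: a halting-problem reduction in which the name outputs $2^{-k}$ once the machine has halted at step $k$ (and $0$ before that), so that comparing it with the constant-zero name would decide halting. The differences are cosmetic. You verify the strict bound $|M_e(n)-x_e|<2^{-n}$ directly, which is valid since $n<k$ forces $2^{-k}\le 2^{-n-1}$, whereas the paper states the non-strict bound and shifts to $a_{n+1}$. You also make the uniform computability of $e\mapsto M_e$ explicit via the s-m-n theorem, which the paper leaves implicit.
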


\begin{proof}[Proof sketch (reduction from the halting problem)]
Let $M$ be a Turing machine. Define the rational sequence $a_k = 2^{-t}$ if $M$
halts within $k$ steps (with $t \le k$ the halting step), and $a_k = 0$
otherwise. This sequence is computable, since it only requires simulating $M$
for $k$ steps; it is monotone and satisfies $|x_M - a_k| \le 2^{-k}$; by
outputting $a_{n+1}$ one obtains the required strict bound $< 2^{-n}$. The limit
is $x_M = 2^{-t}$ (if $M$ halts at step $t$) or $x_M = 0$ (if $M$ never halts).
Hence $x_M \in \Aapp$. We have $x_M = 0$ exactly if $M$ never halts. An equality
test ``$x_M = 0$?'' would thus decide the halting problem. Contradiction.
\end{proof}

The following result constitutes the formal limit of the approach. It is an
immediate consequence of the classical undecidability of equality of computable
real numbers (Lemma~\ref{lem:undecid}), formulated here as a limit of universal
extensional value canonicalization; the original contribution lies not in the
undecidability result itself but in this representation- and protocol-theoretic
reading.

\begin{theorem}[Canonicalization barrier --- no universal extensional
canonicalization]\label{thm:barrier}
Let $I_{\mathrm{app}}$ be the set of machine indices that produce fast rational
Cauchy names of computable real numbers (index $e$ denotes the value $x_e$);
these are exactly the approximating algorithms considered in
Lemma~\ref{lem:undecid}, now regarded as input objects. There exists no
partially computable function $c\colon \N \rightharpoonup \{0,1\}^*$ that is
defined on all $e \in I_{\mathrm{app}}$ and there satisfies
$c(e) = c(f) \Leftrightarrow x_e = x_f$. (On invalid indices, $c$ may diverge;
the validity of an index need not be decidable.)
\end{theorem}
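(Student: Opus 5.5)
We argue by contradiction and reduce to Lemma~\ref{lem:undecid}. Suppose a partially computable $c\colon \N \rightharpoonup \{0,1\}^*$ exists that is defined on all of $I_{\mathrm{app}}$ and satisfies $c(e) = c(f) \Leftrightarrow x_e = x_f$ there. We then build an equality test on $\Aapp$: given two indices $e, f \in I_{\mathrm{app}}$, run the algorithm for $c$ on $e$ and on $f$. Since both lie in the domain, both runs terminate. Compare the resulting finite strings bit by bit and answer ``$x_e = x_f$'' exactly when $c(e) = c(f)$. String comparison is trivially decidable, so this yields a procedure that always terminates on valid inputs and decides $x_e = x_f$. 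This contradicts Lemma~\ref{lem:undecid}.

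The main care point is matching the formulation of Lemma~\ref{lem:undecid} to the index setting, because the statement allows $c$ to diverge on invalid indices. The lemma only demands correct answers on valid inputs, so the derived test needs no behaviour on invalid ones. To make this airtight, we re-run the concrete reduction from the proof of Lemma~\ref{lem:undecid}. Given a Turing machine $M$, we effectively compute an index $g(M)$ of the program $n \mapsto a_{n+1}$ built there; this is a fast Cauchy name of $x_M$, so $g(M) \in I_{\mathrm{app}}$ for \emph{every} $M$. We also fix one index $z \in I_{\mathrm{app}}$ of the constant-zero program. Then $M \mapsto [\,c(g(M)) = c(z)\,]$ is total computable, since both arguments always lie in the domain of $c$. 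It holds exactly when $x_M = 0$, that is, when $M$ never halts. This decides the halting problem, a contradiction.

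The step I expect to need the most attention is that $g$ is computable and always lands in $I_{\mathrm{app}}$. This is a routine s-m-n style argument: the program text of $g(M)$ is obtained by hard-coding $M$ into a fixed simulator. The approximation bound $|a_{n+1} - x_M| < 2^{-n}$ was already checked in the proof of Lemma~\ref{lem:undecid}. No uniformity beyond this single parametrised family is required, so it does not matter that $I_{\mathrm{app}}$ itself is undecidable.
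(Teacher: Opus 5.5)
Your proof is correct and takes essentially the paper's route: the paper likewise argues that the decidable string comparison $c(e) = c(f)$ would decide equality of computable reals given by indices, contradicting Lemma~\ref{lem:undecid}. Your second paragraph, which inlines the halting reduction via the computable map $M \mapsto g(M)$ into $I_{\mathrm{app}}$ and a fixed zero index $z$, only makes explicit the promise-problem reading that the paper leaves implicit.
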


\begin{proof}
Such a function $c$ takes the machine indices themselves as input and produces
the same finite code for value-equal programs --- including syntactically
different ones (extensionality). If $c$ existed, the finite string comparison
$c(e) = c(f)$ would decide the equality of two arbitrary computable real numbers
given by indices, contradicting Lemma~\ref{lem:undecid}. What is decisive is the
effective availability of $c$ on the indices: the mere existence of
value-canonical descriptions does not suffice, since it provides no algorithm
that constructs them from the given programs.
\end{proof}

\paragraph{Interpretation.}
The barrier does not concern the existence of finite names for individual
computable numbers --- every single one possesses such names. It concerns the
uniform, computable, and extensional derivation of a unique code from arbitrary
approximation programs. Equivalently formulated, a \emph{design trilemma}
results: a protocol over arbitrary computable real numbers cannot
simultaneously (1)~accept arbitrary valid approximation programs as inputs,
(2)~treat syntactically different programs for the same value identically
(value-equal programs receive the same code), and (3)~map value-distinct
programs computably to distinct finite codes. Conditions (2) and (3) capture
exactly the two directions of $c(e) = c(f) \Leftrightarrow x_e = x_f$; at least
one of these three requirements must be given up or restricted --- the design
consequences of this are discussed in Section~9.

\begin{corollary}[No computable universal translation; operational core
class]\label{cor:notrans}
There exists no representation system $S$ that represents all values from
$\Aapp$ by finite canonical codes, together with a partially computable
extensional translation, defined on all valid approximation programs, that maps
each program to the canonical code of its value in $S$. (Individual systems with
finite canonical value descriptions do exist --- for instance $S_{\Q}$ for
$\Q$; what is excluded is only the pair consisting of such a system and an
effective translation covering all of $\Aapp$.) If a protocol is to map
semantically equal values to identical protocol bytes independently of their
initial representation, it must therefore restrict its input interface to a
previously fixed, canonically normalizable presentation or provide an equivalent
additional representation rule, instead of operating directly on all of
$\Aapp$. The choice $S_{\Q}$ yields the rational core presentation $(\Q, \SQ)$
with value set $\Aex = \Q$ --- the smallest core class within $\R$ with a
complete field structure; larger ones are possible (for instance algebraic
systems) but require a correspondingly richer normative specification.
\end{corollary}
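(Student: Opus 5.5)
The plan is to reduce the corollary to Theorem~\ref{thm:barrier} by contradiction. Suppose $S = (L, \mathrm{Eval})$ is a representation system whose canonical codes cover all of $\Aapp$. Concretely, assume there is a decidable set of valid codes $C \subseteq L$ such that every $x \in \Aapp$ has exactly one code $d_x \in C$ with $\sem{d_x} = x$. Suppose further that there is a partially computable translation $\tau\colon \N \rightharpoonup C$, defined on all $e \in I_{\mathrm{app}}$, with $\sem{\tau(e)} = x_e$. I then set $c := \tau$, regarded as a map into $\{0,1\}^*$ (recall $L \subseteq \{0,1\}^*$), and check the two directions of the condition $c(e) = c(f) \Leftrightarrow x_e = x_f$ on $I_{\mathrm{app}}$.

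The direction ``$\Leftarrow$'' uses canonicity. If $x_e = x_f$, then $\tau(e)$ and $\tau(f)$ are both valid codes in $C$ denoting the same value. By uniqueness of the canonical code (in the sense of Definition~\ref{def:canenc}, $\mathrm{dec}^{-1}(\{x\})$ is a singleton), they coincide.

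The direction ``$\Rightarrow$'' uses only well-definedness of $\sem{\cdot}$. If $\tau(e) = \tau(f)$, then $x_e = \sem{\tau(e)} = \sem{\tau(f)} = x_f$. Moreover, $c$ is partially computable and defined on all of $I_{\mathrm{app}}$, which is exactly what Theorem~\ref{thm:barrier} rules out. This gives the contradiction.

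The main obstacle is making the informal phrase ``represents all values from $\Aapp$ by finite canonical codes, together with an extensional translation'' precise enough that the reduction is literal. In particular, one must make sure that canonicity is taken relative to values, meaning one code per value rather than merely a deterministic output of $\tau$. Without this, ``$\Leftarrow$'' fails, since a deterministic but non-canonical $\tau$ could send value-equal programs to different valid descriptions. I would therefore state explicitly that the corollary is read with ``canonical'' in the sense of Definition~\ref{def:canenc}. I would also remark that the same argument rules out a translation into a canonically normalizable system: composing $\tau$ with $\mathrm{nf}$ yields a function satisfying the biconditional, by property~(iii).

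The remaining parts of the corollary are design consequences rather than mathematical claims, so I would treat them briefly. The positive example $S_{\Q}$ was already verified in Section~\ref{sec:coreclass}, since $\mathrm{nf}_{\Q}$ is a normalizer and $\mathrm{enc}_{\Q}$ is canonical. The minimality claim is the standard fact that every subfield of $\R$ contains $\Q$ (it contains $1$, hence $\Z$, hence all quotients). The protocol recommendation follows because any byte-input-invariant protocol over all of $\Aapp$ would yield exactly such a $c$.
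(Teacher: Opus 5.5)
Your argument is the paper's own. The hypothesized translation is itself an extensional canonicalizer in the sense of Theorem~\ref{thm:barrier}; you merely spell out the two directions of $c(e)=c(f) \Leftrightarrow x_e = x_f$ that the paper leaves implicit (canonicity for $\Leftarrow$, well-definedness of $\sem{\cdot}$ for $\Rightarrow$), and you add the same $S_{\Q}$ remark. One small caveat concerns your last sentence: byte-input invariance alone only gives $x_e = x_f \Rightarrow c(e) = c(f)$, which a constant encoding already satisfies. To get ``exactly such a $c$'' you must additionally require that the protocol's encoding separates distinct values, as in condition~(3) of the trilemma or Lemma~\ref{lem:inject}.
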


\begin{proof}
A computable universal translation would be an extensional canonicalizer in the
sense of Theorem~\ref{thm:barrier} and therefore cannot exist. For the fixed
system $S_{\Q}$, by contrast, normalization is effectively computable by
computing the greatest common divisor and normalizing the sign; the set of its
represented values is $\Q = \Aex$.
\end{proof}

\begin{lemma}[Injectivity lemma for reproducible implementations]\label{lem:inject}
Let $\mathcal{E}\colon K \to Y$ be an injective evaluation function on a
parameter class $K$. If a reproducibly correct computable implementation
$(\mathrm{enc}, \hat{E})$ of $\mathcal{E}$ exists --- that is,
$\mathrm{enc}\colon K \to \{0,1\}^*$ and
$\hat{E}\colon \{0,1\}^* \to Y$ are total and computable, and
$\hat{E}(\mathrm{enc}(o)) = \mathcal{E}(o)$ holds for all $o \in K$ --- then
$\mathrm{enc}$ is total, computable, finite, and injective.
\end{lemma}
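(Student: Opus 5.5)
The plan is to read off totality, computability and finiteness directly from the hypotheses, and then obtain injectivity by the usual left-inverse argument: $\hat{E}\circ\mathrm{enc} = \mathcal{E}$ with $\mathcal{E}$ injective.

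\textbf{Totality, computability, finiteness.} By assumption $\mathrm{enc}\colon K \to \{0,1\}^*$ is total and computable. Here computability is meant in the sense fixed after Definition~\ref{def:canenc}: the algorithm receives an effective description of $o \in K$ and terminates. Finiteness of every output is then immediate, since each value $\mathrm{enc}(o)$ is an element of $\{0,1\}^*$. That set consists of finite bit strings, or byte strings under the bit and byte convention.

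\textbf{Injectivity.} Let $o_1, o_2 \in K$ with $\mathrm{enc}(o_1) = \mathrm{enc}(o_2)$. Since $\hat{E}$ is a (total) function, applying it to equal inputs gives equal outputs. The correctness condition then yields
\[
\mathcal{E}(o_1) = \hat{E}(\mathrm{enc}(o_1)) = \hat{E}(\mathrm{enc}(o_2)) = \mathcal{E}(o_2).
\]
Injectivity of $\mathcal{E}$ gives $o_1 = o_2$. Equivalently, $\mathcal{E} = \hat{E}\circ\mathrm{enc}$ is injective, and whenever a composite $g \circ f$ is injective, its first factor $f$ is injective.

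\textbf{Main obstacle.} The only real care needed is interpretive, not technical: the statement must be read so that $\hat{E}$ is deterministic. Here "reproducibly correct" means that $\hat{E}$ is a genuine function of its input bytes, with no oracle, randomness or external state, in line with the basic assumptions of Section~3. If $\hat{E}$ could depend on hidden state, two objects with the same code could legitimately evaluate differently, and the argument would fail.

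I would also remark after the proof that the lemma gives injectivity only. It does not give the canonicity condition $\mathrm{dec}^{-1}(\{x\}) = \{\mathrm{enc}(x)\}$, nor agreement between different implementations. As noted in Section~\ref{sec:coreclass}, these require the jointly fixed specification.
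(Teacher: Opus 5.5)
Your proof is correct and matches the paper's argument: totality and computability hold by hypothesis, finiteness comes from the codomain $\{0,1\}^*$, and injectivity follows from $\hat{E}$ being a function together with the correctness condition $\hat{E}(\mathrm{enc}(o)) = \mathcal{E}(o)$ and the injectivity of $\mathcal{E}$. Your closing remark that injectivity alone does not yield canonicity also mirrors the paper's comment directly after the lemma.
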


\begin{proof}
Totality and computability of $\mathrm{enc}$ hold by assumption, finiteness of
the output by the codomain $\{0,1\}^*$. For injectivity, let
$\mathrm{enc}(o_1) = \mathrm{enc}(o_2)$. Since $\hat{E}$ is a function, it
follows that $\hat{E}(\mathrm{enc}(o_1)) = \hat{E}(\mathrm{enc}(o_2))$, hence by
the correctness condition $\mathcal{E}(o_1) = \mathcal{E}(o_2)$, and by the
injectivity of $\mathcal{E}$ finally $o_1 = o_2$.
\end{proof}

The lemma yields only the injectivity of $\mathrm{enc}$, not yet a canonical
encoding in the sense of Definition~\ref{def:canenc}: if, in addition, the code
language $C = \mathrm{enc}(K)$ is decidable and a computable inverse decoder
$\mathrm{dec}$ is given, then $(C, \mathrm{enc}, \mathrm{dec})$ forms a
canonical encoding. The uniqueness of the valid byte sequence is thus an
additional property to be fixed normatively, not a corollary of mere
injectivity.

The lemma isolates a necessary condition for a correct implementation of
injective abstract evaluations: the operational encoding must not identify
objects that the abstract evaluation $\mathcal{E}$ distinguishes.
Equivalently --- via the separation of the abstract evaluation $\mathcal{E}$ and
the implemented evaluation $\hat{E} \circ \mathrm{enc}$ --- the encoding must
inherit the injectivity of the abstract function; this is a necessary (not yet
sufficient) condition of canonical encoding.

\paragraph{Scope (representation layer vs.\ hash layer).}
The injectivity requirement does not concern cryptographic hash functions
themselves: due to their finite codomain, these cannot be injective, and
collision resistance merely means that collisions are practically hard to find.
The requirement rather concerns the upstream representation layer: if
semantically equal objects are to produce the same hash or signature input
independently of their initial representation, they must first be mapped to the
same byte sequences by a fixed canonicalization. The lemma is here applied not
to the hash function but to an upstream, value-distinguishing representation
mapping; the hash function processes only its byte output. Its collision
resistance is a cryptographic property separate from this. The lemma thus
concerns the well-definedness of the generation and comparison steps, not the
security of the hash layer.

With this, the question raised in the introduction --- which classes of numbers
qualify as operational protocol parameters --- is answered on the level of
computability. The statement is to be phrased with care: a protocol can
certainly operate in a well-defined manner on byte sequences even if several byte
sequences denote the same semantic entity --- for instance, by having a decoder
accept different notations and subsequently operate on the decoded value. A
canonicalization is required, however, if the protocol is to represent semantic
object equality by equality of the serialized byte sequences, or to map
semantically identical objects to the same hash or signature input independently
of their initial representation. For this class of protocols, the rational core
presentation $(\Q, \SQ)$ or a canonically serializable object class provides an
immediately available solution.

\section{Symmetric Encryption: A Comparison (Stream Cipher via XOR)}

\subsection{A Key in Rational Core Presentation (Fully Worked Example)}\label{sec:xorexample}

We consider a stream-cipher construction based on XOR. Let the key be
$k = 13/37 \in \Aex$. From the canonical rational normal form $13/37$, the local
toy example extracts the mathematical integer pair
$\mathrm{pair}_{\mathrm{toy}}(13/37) = (p, q) = (13, 37)$; this is not the
general specification $\SQ$. The following generator works directly on this pair
$(p, q)$ --- a concrete byte encoding of the pair is not needed here. The
keystream is generated deterministically as a byte sequence
$S = (s_0, \dots, s_{L-1})$. The following generator is deliberately not
cryptographically secure; it serves exclusively to make the representation
question visible in a fully traceable example:
\begin{align*}
s_i &= (p + 31 \cdot q + i) \bmod 256\\
p &= 13,\quad q = 37\\
p + 31 \cdot q &= 13 + 31 \cdot 37 = 13 + 1147 = 1160
\end{align*}
\begin{center}
\begin{tabular}{l}
$i = 0$: $s_0 = 1160 \bmod 256 = 136$ \texttt{(0x88)}\\
$i = 1$: $s_1 = 1161 \bmod 256 = 137$ \texttt{(0x89)}\\
$i = 2$: $s_2 = 1162 \bmod 256 = 138$ \texttt{(0x8A)}\\
$i = 3$: $s_3 = 1163 \bmod 256 = 139$ \texttt{(0x8B)}\\
$i = 4$: $s_4 = 1164 \bmod 256 = 140$ \texttt{(0x8C)}\\
$S$ (hex) = \texttt{[88, 89, 8A, 8B, 8C]}
\end{tabular}
\end{center}
Let the plaintext $P$ be ``HALLO'' in ASCII (hexadecimal):
$P = \texttt{[48, 41, 4C, 4C, 4F]}$. Encryption is performed bytewise by XOR
with the keystream: $C_i = P_i \oplus s_i$.
\begin{center}
\begin{tabular}{l}
\texttt{48} $\oplus$ \texttt{88} = \texttt{C0}\\
\texttt{41} $\oplus$ \texttt{89} = \texttt{C8}\\
\texttt{4C} $\oplus$ \texttt{8A} = \texttt{C6}\\
\texttt{4C} $\oplus$ \texttt{8B} = \texttt{C7}\\
\texttt{4F} $\oplus$ \texttt{8C} = \texttt{C3}\\
$C$ (hex) = \texttt{[C0, C8, C6, C7, C3]}
\end{tabular}
\end{center}
Decryption proceeds identically: $P = C \oplus S$. Reproducibility is guaranteed
because the rational normal form $13/37$ is uniquely fixed according to $\SQ$
and the generator operates deterministically on the mathematical pair $(p, q)$
extracted from it. (If $k$ is instead to be transmitted or hashed as a byte
sequence, a byte encoding of the pair must additionally be fixed --- with tuple
delimitation and integer encoding; cf.\ the separate byte variant
$\mathrm{bytes}_{\mathrm{toy}}$ in Section~\ref{sec:toyhashenc}.)

\subsection{A Comparison: A Real Key Without a Normatively Fixed Presentation}\label{sec:xorcontrast}

For comparison, we consider the same approach with a key for which no normative
presentation is prescribed, say $k = \pi$, where the protocol is given only an
approximation procedure but no fixed rounding or precision. It is not the value
$\pi$ that is the problem, but its unfixed presentation: for the generator $G$
to work, $k$ must be transformed into a finite byte sequence, e.g.\ as a decimal
string (``3.14'', ``3.1416'') or as a floating-point number (IEEE~754
float/double~\cite{ieee754}). Within a fixed IEEE~754 format and for finite
non-NaN values, the decoding of a given bit string is indeed unique; the
non-uniqueness arises, however, in the implementation-dependent choice of the
mapping of the intended parameter to a finite approximation, which additionally
depends on format, precision, and rounding rule (for instance $\pi$ in single
vs.\ double precision, or after different rounding) and yields different bit
strings. This representation is therefore not canonical without additional
stipulations.

\section{Asymmetric Encryption: A Comparison (Toy RSA)}

\subsection{RSA with a Fixed Integer Encoding (Fully Worked Example)}\label{sec:rsaexample}

All mathematical RSA values lie in $\Z \subset \Q$. Their operational
reproducibility, however, does not follow from this set-theoretic inclusion but
additionally presupposes a fixed encoding of the integers and key structures
(the rational core presentation $(\Q, \SQ)$ or a corresponding integer
specification). This classification is not meant to redefine RSA but to
emphasize that RSA operates exclusively with numbers that are finitely and ---
given a fixed encoding --- canonically representable. The following parameters
are deliberately small and serve illustration only (toy).
\begin{center}
\begin{tabular}{l}
$p = 11$\\
$q = 17$\\
$n = p \cdot q = 187$\\
$\varphi(n) = (p - 1)(q - 1) = 10 \cdot 16 = 160$\\
$e = 7$\\
Determine $d$ with $e \cdot d \equiv 1 \pmod{160}$:\\
$7 \cdot 23 = 161 \equiv 1 \pmod{160} \Rightarrow d = 23$\\
Public key: $(n, e) = (187, 7)$\\
Private key: $d = 23$
\end{tabular}
\end{center}
Encryption of a block $m = 42$ ($< n$): $c = m^e \bmod n$. Decryption:
$m = c^d \bmod n$.
\begin{center}
\begin{tabular}{l}
Encryption: $c = 42^7 \bmod 187 = 15$\\
Decryption: $m = 15^{23} \bmod 187 = 42$
\end{tabular}
\end{center}
All mathematical intermediate values are exact and deterministically
reproducible; an operational representation reproducible in all
specification-conformant implementations additionally presupposes the
aforementioned integer and key-structure encoding. Concretely: the modulus
$n = 187$ becomes a unique byte sequence only through a fixed integer encoding
rule. Under the rule ``unsigned, big-endian, fixed length 2 bytes'' we have
\begin{center}
$187 = \texttt{0x00BB} \rightarrow \texttt{[0x00, 0xBB]}$ \ (big-endian, 2
bytes, unsigned)
\end{center}
Without such a rule it would remain open whether $187$ is serialized as one byte
(\texttt{[0xBB]}), as two bytes in big- or little-endian
(\texttt{[0x00, 0xBB]} vs.\ \texttt{[0xBB, 0x00]}), or as an ASCII digit string
(``187'' = \texttt{[0x31, 0x38, 0x37]}). Thus even exact integers do not by
themselves determine concrete protocol bytes; endianness, length, and sign rule
belong to the operational specification.

\subsection{A Comparison: Real Numbers as Key Material (the $\R \to \Z$ Problem)}\label{sec:rsacontrast}

An approach that derives RSA keys from real numbers requires a mapping
$\R \to \Z$ (e.g.\ via decimal digits, rounding, or hashing of an
approximation). Without canonical rules, different approximations of the
intended real parameter can and in general will lead to different integers and
hence to different keys (with rounding, two approximations may also yield the
same integer). Key generation without additional encoding or specification rules
is therefore implementation-dependent and not well-defined.

\section{Hash Functions: A Comparison (Toy Hash)}

\subsection{A Toy Hash Function (Fully Computable)}

Since cryptographic hash functions such as SHA-256 are too large for a manual
worked example, we use a fully computable toy hash function to make the
representation question visible (no security):
\[
H(x) = \Bigl(\ \sum_{i=0}^{L-1} (i + 1) \cdot x_i\ \Bigr) \bmod 256,
\quad\text{where } x \text{ is a byte sequence.}
\]

\subsection{Rational Normal Form with a Fixed Local Toy Byte Encoding}\label{sec:toyhashenc}

Let $k = 13/37 \in \Aex$ and let a local toy byte encoding be
$\mathrm{bytes}_{\mathrm{toy}}(13/37) = [13, 37]$ (decimal as bytes; this byte
mapping is to be distinguished from the pair mapping
$\mathrm{pair}_{\mathrm{toy}}$ of Section~\ref{sec:xorexample}). This local toy
byte encoding suffices only for the example: it encodes one byte each for
numerator and denominator and is unsuitable as a general encoding of rational
numbers, since numerator and denominator can be arbitrarily large and negative
(a general encoding requires a sign rule, variable lengths, and unique tuple
delimitation). Then:
\begin{align*}
H([13, 37]) &= (1 \cdot 13 + 2 \cdot 37) \bmod 256\\
&= (13 + 74) \bmod 256\\
&= 87
\end{align*}
The hash value is uniquely determined because the input is uniquely fixed.

\subsection{A Comparison: A Real Value (Representation Dependence)}\label{sec:hashcontrast}

Let $k = \pi$ without a normatively fixed encoding. Two plausible finite
approximations intended to serve as operational inputs for the intended
parameter $\pi$ are then the ASCII strings ``3.14'' and ``3.1416''. They
represent different rational values ($157/50$ and $3927/1250$, respectively);
the resulting byte sequences differ and produce different hash values --- the
cause is not the value $\pi$ but the absence of a fixed presentation.
\begin{align*}
\text{``3.14''} &\rightarrow [51, 46, 49, 52]\\
H &= (1 \cdot 51 + 2 \cdot 46 + 3 \cdot 49 + 4 \cdot 52) \bmod 256\\
&= (51 + 92 + 147 + 208) \bmod 256 = 498 \bmod 256 = 242\\[4pt]
\text{``3.1416''} &\rightarrow [51, 46, 49, 52, 49, 54]\\
H &= (1 \cdot 51 + 2 \cdot 46 + 3 \cdot 49 + 4 \cdot 52 + 5 \cdot 49 + 6 \cdot 54) \bmod 256\\
&= 1067 \bmod 256 = 43
\end{align*}
The hash here depends on the chosen operational byte sequence and not directly
on the intended abstract parameter $\pi$. The same holds for an IEEE~754
encoding: within a fixed format and for finite non-NaN values, the decoding of a
given bit string is unique, but the mapping of the intended irrational parameter
to such a bit string additionally depends on format, precision, and rounding
rule (single vs.\ double precision, different rounding) and yields different
byte sequences and hence different hash values. This motivates a normative
representation rule and --- where byte-input invariance is demanded --- a
canonical representation or equivalent normalization.

\subsection{Transfer to Real SHA-256}\label{sec:sha}

The representation dependence made visible with the toy hash occurs unchanged
with real hash functions. Here, SHA-256(``3.14'') denotes the hash of the
ASCII/UTF-8 byte sequence \texttt{33 2e 31 34} (without quotation marks), and
SHA-256($[13, 37]$) the hash of the two raw bytes \texttt{0d 25}. For SHA-256:
\begin{center}
\begin{tabular}{l}
SHA-256(``3.14'') = \texttt{2efff126\ldots43500215}\\
SHA-256(``3.1416'') = \texttt{23741492\ldots75a42602}
\end{tabular}
\end{center}
Both strings are plausible finite approximations that could be used as
operational inputs for the intended parameter $\pi$. Since they represent
different rational values and different byte sequences, they yield different
hash values. For the fixed local toy encoding, by contrast, the input byte
sequence is uniquely determined:
\begin{center}
SHA-256($[13, 37]$) = \texttt{51698d9a\ldots2f94f015}
\end{center}
The example does not illustrate the internal workings of SHA-256, but the fact
that real hash functions hash concrete byte sequences --- not abstract
mathematical objects. (The full 256-bit values are given in
Appendix~\ref{app:hash}.)

The examples from symmetric encryption, RSA, and hashing jointly show that the
mathematical determinacy of a value does not guarantee operational byte
uniqueness; the latter arises only through fixed encoding, serialization, or
canonical normalization.

\subsection{A Real-World Hash-Anchoring Protocol: snaproot}\label{sec:snaproot}

Before the example, it is helpful to separate three related but distinct
situations --- a small taxonomy of operational representations that otherwise
easily merge in discussion: (a)~\emph{exact byte identity} --- an object is
exactly one byte sequence $F \in \{0,1\}^*$, which is hashed as such; no further
canonicalization is needed, since the protocol deliberately commits to the
bytes. (b)~\emph{Fixed serialization of structured objects} --- a structured
object is transformed into bytes according to a fixed rule. Within (b) we
distinguish further: deterministic or explicitly specified (the rule reproducibly
produces the same bytes from a given structured value) and additionally
canonical (there exists exactly one valid encoding per semantic object, so that
different initial representations of the same object also produce the same
bytes). (c)~\emph{Canonical value presentation of real numbers} --- several
descriptions of the same real value are to be reduced to the same normal form.
snaproot combines two of these levels: the file hash is a case of (a), the
structured SR1 memo payload a case of (b), namely --- as elaborated below --- in
the sense of a prescribed, not necessarily canonical, serialization.
Table~\ref{tab:taxonomy} summarizes this taxonomy and assigns to each of the
three levels its reference object and an example.

\begin{table}[ht]
\centering
\small
\begin{tabular}{|p{5.2cm}|p{3.6cm}|p{3.4cm}|}
\hline
\textbf{Level} & \textbf{Reference object} & \textbf{Example} \\ \hline
(a) exact byte identity & concrete bytes & file $F$ \\ \hline
(b) fixed serialization & structured data & SR1 payload \\ \hline
(c) canonical value presentation & semantic value & reduced fraction \\ \hline
\end{tabular}
\caption{Three operational representation levels. Levels (a)--(c) are different
requirements, not stages of a ranking.}
\label{tab:taxonomy}
\end{table}

The examples so far are deliberately small. The same representation problem,
however, also occurs in real cryptographic protocols. One example is
snaproot~\cite{snaproot}, a blockchain-based hash-anchoring protocol for file
integrity verification; it is the author's own case study, not independent
external evidence. Insofar as the term ``commitment'' is used below, it is meant
in the practical protocol sense of a public, later
verifiable commitment to a concrete file byte sequence or its hash (a public hash
commitment); the classical hiding property of cryptographic commitment schemes
is not claimed thereby. A file is treated there not as an abstract mathematical
object but as a finite byte sequence $F \in \{0,1\}^*$. The hash algorithm
SHA-256 deterministically maps this byte sequence to a 256-bit value $H$:
\[
H = \text{SHA-256}(F)
\]
What is hashed is not the semantic content, not the file name, and not the
visual rendering, but exactly the concrete byte sequence of the file.

The value $H$ is subsequently anchored on the blockchain in a structured SR1
payload with a prescribed format. The SR1 payload is the memo payload serialized according
to the snaproot SR1 format, anchored as the memo content of a transaction; the
terms SR1 structure and SR1 parser rule refer in what follows to the layout of
this payload and to the rule for its unique evaluation. ``SR1'' is
snaproot-specific format terminology. The payload is a case of fixed
serialization of structured data (version prefix, hexadecimal hash, client
timestamp, and further metadata fields); the user wallet is here the
\emph{signer} of the transaction, not a field of the payload:
\begin{align*}
\mathrm{payload} &= \mathrm{enc}_{\mathrm{SR1}}(H, \mathrm{timestamp},
\mathrm{metadata})\\
\mathrm{txid} &= \mathrm{Anchor}_{\mathrm{wallet}}(\mathrm{payload})
\end{align*}
The later verification is likewise a purely finite algorithm:
\begin{center}
\begin{tabular}{l}
$\mathrm{Verify}(F, \mathrm{txid})$:\\
\quad $H' = \text{SHA-256}(F)$\\
\quad $H_{\mathrm{chain}} = \mathrm{ExtractHash}(\mathrm{txid})$\\
\quad \textbf{return} $H' = H_{\mathrm{chain}}$
\end{tabular}
\end{center}
All relevant objects here possess finite and reproducible representations: the
file $F$ as a byte sequence (exact byte identity), the hash $H$ as a 256-bit
string, the transaction ID $\mathrm{txid}$ as a finite character string, and the
SR1 payload as an explicitly specified structured string; the wallet address acts as
the signer of the transaction. In the terminology of this paper, the operational
level of the protocol therefore lies entirely on the side of fixed finite
representations --- the hash input is the exactly fixed byte sequence of the
file (exact byte identity on the input side), the SR1 payload a case of fixed
serialization of structured data. (Whether the SR1 payload is canonical in the
strong sense of Definition~\ref{def:canenc} --- exactly one valid encoding per
metadata object --- depends on uniqueness rules for separators, GPS and null
values, capitalization, etc., which the fixed pipe format does not necessarily
determine completely; we therefore speak of a payload whose format is prescribed
but not necessarily canonical.) The snaproot example thus shows that real hashing and
verification protocols require a uniquely specified operational
representation --- depending on the object level, either exact byte identity or
a fixed serialization of structured data.

snaproot thereby illustrates two different operational levels: the file is
committed as a concrete byte sequence, while the metadata enter the payload via
a fixed structured serialization. A direct commitment to an abstract real value, by
contrast, would be well-defined only after fixing a representation or
approximation rule.

\section{Representation, Well-Definedness, and Correctness Arguments}\label{sec:correctness}

Beyond their operational significance for cryptographic protocols, fixed
algorithmic presentations also have a conceptual advantage for correctness
arguments: a fixed operational presentation turns the algorithm executed on it
into a uniquely determined function, so that elementary correctness proofs
become possible without representation-dependent auxiliary assumptions. The
correctness proof for XOR in Section~\ref{sec:xorproof} does not directly
illustrate the injectivity lemma (Lemma~\ref{lem:inject}) of
Section~\ref{sec:limits} --- it requires no injectivity of the keystream
generator --- but this more general point: with a fixed encoding of the key, the
generator is deterministic and the procedure uniquely determined.

The further connection --- that byte-input-invariant processing requires a
canonical encoding --- concerns the verification and comparison steps
(Section~\ref{sec:verifwd}), not the XOR correctness proof itself. The
counterexamples in Sections~\ref{sec:xorcontrast}, \ref{sec:rsacontrast},
and~\ref{sec:hashcontrast} show what goes
wrong without a fixed presentation: different operational approximations of the
same intended real parameter lead to different byte sequences and hence to
non-reproducible results.

\paragraph{Three properties to be separated.}
Throughout, we distinguish three different requirements: \emph{correctness} (a
procedure yields the specified result --- for this, a deterministic encoding
already suffices), \emph{byte-input invariance} (semantically equal
presentations produce the same operational byte input --- for this, a canonical
encoding is required; the more general representation invariance of the same
result can also be achieved without byte equality by extensional processing),
and \emph{cryptographic security} (secrecy, collision or second-preimage
resistance). None of these properties implies another; in particular, a
canonical encoding does not make a toy generator or a toy hash construction
secure. That the standard security of a signature primitive (existential
unforgeability) does not automatically guarantee all properties expected by the
surrounding protocol layer is also shown by the applied security literature ---
for instance regarding the unique binding of signatures to public
keys~\cite{seemslegit}.

\subsection{Abstract Encryption Step}

Given a message $m \in \{0,1\}^*$ and a deterministic keystream generator
$G\colon \{0,1\}^* \times \N \to \{0,1\}^*$ with $|G(b, L)| = L$, which from an
encoding $b$ and a length $L$ produces a keystream of exactly this length. With
$S_{k,m} = G(\mathrm{enc}(k), |m|)$, the XOR-based encryption is (bitwise; $m$
and $S_{k,m}$ are bit strings of equal length):
\begin{align*}
\mathrm{Enc}(k, m) &= m \oplus S_{k,m}, \qquad S_{k,m} = G(\mathrm{enc}(k), |m|)\\
\mathrm{Dec}(k, c) &= c \oplus G(\mathrm{enc}(k), |c|)
\end{align*}

\subsection{Correctness Proof for a Fixed Key Encoding}\label{sec:xorproof}

Let $K$ be a key class with a fixed deterministic operational encoding
$\mathrm{enc}\colon K \to \{0,1\}^*$ and let $G$ be the above keystream
generator. Then $S_{k,m} = G(\mathrm{enc}(k), |m|)$ is, for every $k \in K$ and
every message $m$, well-defined, deterministic, and of length $|m|$. For XOR
correctness this determinacy suffices; neither injectivity of $G$ nor semantic
canonicity is required. Even if two different keys produced the same keystream,
$(m \oplus S) \oplus S = m$ would remain correct.

\begin{theoremx}[Correctness]
For every key class $K$ with a fixed deterministic encoding, for all $k \in K$
and all messages $m$: $\mathrm{Dec}(k, \mathrm{Enc}(k, m)) = m$.
\end{theoremx}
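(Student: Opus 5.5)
The argument is a direct unfolding of the definitions of $\mathrm{Enc}$ and $\mathrm{Dec}$, followed by the algebraic identity $(m \oplus S) \oplus S = m$ for bit strings of equal length. Fix $k \in K$ and a message $m \in \{0,1\}^*$. The first step uses the fact that $\mathrm{enc}$ is a fixed deterministic function. Hence $b := \mathrm{enc}(k)$ is a single, uniquely determined bit string, and $S := G(b, |m|)$ is uniquely determined because $G$ is deterministic. By the specification of $G$, this keystream has $|S| = |m|$. Put $c := \mathrm{Enc}(k, m) = m \oplus S$. Since bitwise XOR of two strings of equal length preserves length, $|c| = |m|$.

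The second step, which is the only point needing care, is to check that decryption regenerates exactly the same keystream. By definition, $\mathrm{Dec}(k, c) = c \oplus G(\mathrm{enc}(k), |c|)$. Both arguments of $G$ coincide with those used during encryption: the first is $\mathrm{enc}(k) = b$, by determinism of $\mathrm{enc}$, since it is the same function applied to the same $k$; the second is $|c| = |m|$, by the first step. Since $G$ is deterministic, $G(\mathrm{enc}(k), |c|) = S$. This is exactly where the hypothesis of a \emph{fixed} deterministic encoding enters. Without it, encryption and decryption could pick different byte representations of $k$, as in the counterexample of Section~\ref{sec:xorcontrast}, and the two keystreams might differ.

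Finally, $\mathrm{Dec}(k, \mathrm{Enc}(k, m)) = (m \oplus S) \oplus S$. The result follows positionwise from associativity of XOR on $\{0,1\}$ together with $s \oplus s = 0$ and $x \oplus 0 = x$. This gives $(m \oplus S) \oplus S = m \oplus (S \oplus S) = m \oplus 0^{|m|} = m$. As noted before the statement, no injectivity of $G$ or $\mathrm{enc}$ is used, and no canonicity in the sense of Definition~\ref{def:canenc} is needed: determinism together with the length property $|G(b,L)| = L$ suffices. I expect no real obstacle. The only subtle point is making explicit that $|c| = |m|$, so that the length argument passed to $G$ during decryption agrees with the one used during encryption.
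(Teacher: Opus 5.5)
Your proof is correct and follows the paper's own argument essentially step for step. The shared steps are: the keystream $S = G(\mathrm{enc}(k), |m|)$ is uniquely determined and has length $|m|$; $c = m \oplus S$ has $|c| = |m|$, so decryption regenerates the same $S$; and XOR involutivity $(m \oplus S) \oplus S = m$ finishes the proof, with neither injectivity nor canonicity needed.
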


\begin{proof}
\begin{enumerate}
\item $S_{k,m} = G(\mathrm{enc}(k), |m|)$ is uniquely determined and
$|S_{k,m}| = |m|$.
\item $c = \mathrm{Enc}(k, m) = m \oplus S_{k,m}$, hence $|c| = |m|$.
\item $\mathrm{Dec}(k, c) = c \oplus G(\mathrm{enc}(k), |c|) = c \oplus S_{k,m}$.
\item $(a \oplus b) \oplus b = a$ (involutivity of XOR, bitwise).
\item $\Rightarrow \mathrm{Dec}(k, c) = (m \oplus S_{k,m}) \oplus S_{k,m} = m$.
\end{enumerate}
Thus $\mathrm{Dec}(k, \mathrm{Enc}(k, m)) = m$.
\end{proof}

A canonical encoding becomes additionally relevant here only when semantically
equal key presentations are to produce the same keystream independently of their
initial notation --- that is, when representation invariance rather than mere
determinacy is demanded. The rational core presentation $(\Q, \SQ)$ provides a
concrete instance of such an encoded key class; $\Q$ appears here not as a
mathematical prerequisite of XOR correctness but as a concrete choice with an
immediately available canonical encoding.

\subsection{Well-Definedness of the snaproot Verification}\label{sec:verifwd}

The snaproot hash-anchoring protocol provides a real instance of the same
principle. The verification function is well-defined because its inputs are
given as uniquely specified finite representations: $F$ as a concrete byte
sequence and the payload according to the fixed SR1 structure.

\begin{theoremx}[Well-definedness]
Let $F$ be a finite byte sequence and $\mathrm{txid}$ a transaction ID that
refers to a uniquely retrievable on-chain memo payload with stored hash
$H_{\mathrm{chain}}$. Here the blockchain network, a finalized ledger view ---
that is, a stable, confirmed on-chain state --- and the SR1 parser rule are
assumed to be fixed. Then $\mathrm{Verify}(F, \mathrm{txid})$ yields a
deterministic result that is, under these fixed assumptions and for an identical
byte sequence, reproducible independently of the implementation.
\end{theoremx}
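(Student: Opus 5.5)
The argument decomposes $\mathrm{Verify}(F,\mathrm{txid})$ into its three steps and shows that each is a well-defined deterministic function of fixed inputs. Under the stated assumptions, each step's output is uniquely determined by $F$, by $\mathrm{txid}$, and by the fixed parameters (network, finalized ledger view, SR1 parser rule). The composition of uniquely determined functions is then a uniquely determined function, so the result depends on nothing that varies between implementations.

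First, the computation of $H' = \text{SHA-256}(F)$. $F$ is by hypothesis a concrete finite byte sequence, which is exact byte identity in the sense of level~(a) of Table~\ref{tab:taxonomy}. SHA-256 is a mathematically specified function on bit strings~\cite{fips180-4}, and by the bit and byte convention of Section~3.1 the byte sequence is a bit string. Hence $H'$ is a uniquely determined 256-bit string, and any conformant implementation must output it, since conformance to FIPS~180-4 means computing exactly this function.

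Second, the extraction $H_{\mathrm{chain}} = \mathrm{ExtractHash}(\mathrm{txid})$. I would factor it as retrieval followed by parsing. Retrieval maps $\mathrm{txid}$ to its memo payload. Because the network and the finalized ledger view are fixed, and $\mathrm{txid}$ refers to a uniquely retrievable payload, retrieval is a function: a finite byte string, with no dependence on timing or on node choice. Parsing applies the fixed SR1 parser rule to that string and yields the stored hash field. Since the rule is fixed and deterministic, $H_{\mathrm{chain}}$ is uniquely determined. The statement in fact names $H_{\mathrm{chain}}$ as a given, so the argument only needs that the parser maps the payload to it.

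Third, the final comparison $H' = H_{\mathrm{chain}}$ is equality of two finite strings, which is decidable and deterministic. Composing the three steps shows that $\mathrm{Verify}$ is a total deterministic function of $(F,\mathrm{txid})$ relative to the fixed parameters, and this gives implementation-independent reproducibility for identical $F$.

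I expect the main obstacle to be the second step. Here the argument must make explicit that all environmental nondeterminism has been pushed into hypotheses: chain reorganizations, node disagreement, and ambiguity in how the payload is interpreted. The remark preceding the statement says the SR1 payload is not necessarily canonical, so I would stress that canonicity is not needed here. We only need the parser to be a function from payload strings to hash values, which is determinism in the sense of Section~\ref{sec:correctness}, not uniqueness of encoding. The hash field's representation (for instance hex case) must be fixed by the parser rule so that the comparison with $H'$ is well-defined. This is exactly the "three properties to be separated" distinction: correctness and well-definedness need only a deterministic encoding.
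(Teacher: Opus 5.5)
Your proposal is correct and follows essentially the same route as the paper's proof idea: SHA-256$(F)$ is uniquely determined for a finite byte sequence, $\mathrm{ExtractHash}(\mathrm{txid})$ is uniquely determined once the network, the finalized ledger view and the SR1 parser rule are fixed, and the last step is a decidable comparison of two finite bit strings. Your split of $\mathrm{ExtractHash}$ into retrieval followed by parsing is a useful refinement that the paper leaves implicit, as is your remark that the parser rule must also fix the hash field's representation (e.g.\ hex case) so that the comparison with $H'$ is meaningful.
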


\begin{proof}[Proof idea]
Since $F$ is a finite byte sequence, SHA-256($F$) is uniquely determined. Since
$\mathrm{txid}$ is a finite character string and the payload structure is fixed,
$\mathrm{ExtractHash}(\mathrm{txid})$ is also uniquely determined. Verification
thus reduces to the comparison of two finite bit strings. Equality of finite bit
strings is decidable. Consequently, the result TRUE or FALSE is well-defined.
\end{proof}

The result TRUE here denotes the equality of the two SHA-256 values, not an
information-theoretic proof of the equality of the files: from
$H' = H_{\mathrm{chain}}$ it does not follow mathematically that the verified
file is byte-identical to the originally anchored file, since hash collisions
are possible in principle. The interpretation as an integrity proof rests on the
cryptographic assumption of the second-preimage or collision resistance of
SHA-256 --- a property separate from the representation and well-definedness
question treated here.

For an abstract real object $x \in \R$, this statement does not hold without an
additional encoding. Since $\R$ is uncountable but $\{0,1\}^*$ is countable, no
injective finite encoding of all real numbers exists; operational protocols must
therefore either be restricted to a canonically encodable subclass
$K \subseteq \R$ or fix an explicit approximation or canonicalization.

It is important to separate this cardinality argument from the result of
Section~\ref{sec:limits}: it concerns all of $\R$ ($|\R| > |\{0,1\}^*|$). For
the countable class $\Aapp$, cardinality is not the obstacle --- here, instead,
the impossibility of a computable extensional canonicalization from
Theorem~\ref{thm:barrier} applies. Both arguments lead to the same design
consequence but rest on different foundations.

Only an encoding fixed for the respective protocol purpose --- injective where
value distinguishability is demanded and canonical where byte-input invariance
is demanded --- transforms the elements of an operationally used subclass $K$
into operational protocol bytes. Different choices of encoding in general lead
to different commitments. Precisely herein lies the role of canonically
serializable representations (Definition~\ref{def:serializable}): they make
hashing, anchoring, and verification mathematically and practically well-defined
operations as soon as byte-input-invariant processing is demanded.

\subsection{Why the Operational Instantiation Is Not Well-Defined Without a
Fixed Presentation}

Now let the key $k$ be a real value, say $k = \pi$. To execute $G(k)$, $k$ must
be transformed into a finite representation (e.g.\ a decimal string or an
IEEE~754 floating-point number). Without normative specifications (format,
rounding, precision, normal form), this representation is not unique. Here, too,
the non-uniqueness lies not in the decoding of a concrete bit string within a
fixed IEEE~754 format, but in the mapping of the irrational value to a finite
approximation. The operational instantiation of $G(k)$ is therefore not
well-defined. The premises of the correctness theorem of
Section~\ref{sec:xorproof} are thus not fulfilled as long as no representation
rule has been fixed.

\subsection{Conclusions}

\begin{itemize}
\item A fixed effective representation makes the operational inputs and the
protocol steps defined on them uniquely determined; canonical representations
moreover guarantee byte-input invariance. Under deterministic further
processing, representation invariance follows from this.
\item Once the representation specification has been fixed, correctness proofs
can be carried out without further unstated or implementation-dependent
representation assumptions (e.g.\ divergent floating-point formats or rounding
rules).
\item The direct operational use of real values requires an explicit
representation, approximation, or canonicalization rule; these are
mathematically formalizable components of the protocol specification.
\item The central advantage concerns byte-level well-definedness without
additional serialization or rounding premises; no statement about cryptographic
security in the sense of secrecy, collision or second-preimage resistance is
implied thereby.
\end{itemize}

\section{Advantages of Fixed Algorithmic Presentations}

This section summarizes the insights gained above and puts them into systematic
order. The comparisons and the snaproot example show that the decisive
difference lies not in the numerical value but in the algorithmic
representability and canonicity of the representation.

\begin{itemize}
\item \textbf{Canonical representations:} values in the rational core
presentation $(\Q, \SQ)$ possess a uniquely fixed operational encoding.
\item \textbf{Reproducibility:} the same inputs lead to the same bytes in all
specification-conformant implementations.
\item \textbf{Protocol well-definedness:} the specification fixes inputs,
encodings, and processing steps in such a way that specification-conformant
implementations are interoperable without further implementation-dependent
auxiliary assumptions.
\item \textbf{Commitment well-definedness:} for commitments to concrete files,
the byte sequence itself is the object; for commitments to structured or
semantically interpreted objects, a canonical serialization is required insofar
as semantically equal initial representations are to produce the same commitment
input.
\item \textbf{Separation of operation and analysis:} rational core
presentations or other fixed operational representation systems for keys and
inputs; $\Aapp$ for error bounds, convergence, and analytic models.
\item \textbf{No additional decisions at run time:} once the normative encoding
has been fixed, no further implementation-dependent decisions about format,
rounding, or precision are required.
\end{itemize}

Important: these advantages concern formal protocol suitability and
reproducibility; they are no statement about the cryptographic security of the
toy constructions.

\paragraph{Natural objections.}
The following objections suggest themselves and are anticipated here.

\emph{Cryptography works with bit strings anyway.} Cryptographic protocols
operate with bit strings, integers, group and field elements --- so why the
detour via $\R$? $\Aex = \Q$ is a deliberately chosen rational core class and
not a maximal criterion for finite exact presentability: $\sqrt{2}$ and other
algebraic numbers, too, possess suitable exact finite representations, and
larger canonically representable systems (in particular for real algebraic
numbers~\cite{basupollackroy}) are possible. $\Aex$ is not intended as a
replacement for the objects mentioned, but as a conceptual model of the fact
that, and how, real values can be transformed into canonical operational objects
without additional approximation. The formal notion of the canonically
serializable object class (Definition~\ref{def:serializable}) transfers this
criterion to the finite objects occurring in practice --- files, hashes,
transaction IDs, payloads --- which are themselves not real numbers but share
the same property of canonical, finite representability.

\emph{Why not simply $\Z$?} A restriction to integers is accurate for many
cryptographic primitives but too restrictive for analysis purposes. The
advantage lies not in replacing $\Z$ but in the explicit separation between
operational objects (rational core presentation or canonically serializable
representations) and analytic quantities ($\Aapp$).

\emph{What does $\Aapp$ additionally provide?} Approximable numbers from $\Aapp$
make it possible to formulate convergence, error bounds, and numerical models
precisely without endangering operational well-definedness. $\Aapp$ thus covers
the analytic side that $\Aex$ deliberately leaves out.

\emph{Is this a security claim?} No. All advantages concern well-definedness,
reproducibility, and the formulation of byte-level correctness arguments; no
statement is made about the cryptographic security of the constructions used.

ASN.1/DER~\cite{x690} and the normatively fixed point, key, and signature
encodings in RFC~8032 (Ed25519)~\cite{rfc8032} are examples of tightly specified
operational encodings (RFC~8032 requires, for instance, $0 \le S < L$ and fixed
point and integer encodings; different verification equations can lead to
diverging sets of accepted signatures). snaproot~\cite{snaproot}, by contrast,
illustrates more generally the use of prescribed finite representations;
according to the present specification, the SR1 payload is not necessarily
canonical in the strong sense of Definition~\ref{def:canenc} (cf.\
Section~\ref{sec:snaproot}). All three can be understood, in the terminology of
this paper, as realizations of fixed operational representations, and they show
that the demanded separation between the operational and the analytic level is
already implicitly carried out in real systems.

\section{Limitations}

The scope of this paper is deliberately controlled. We state the most important
limitations openly:

\begin{itemize}
\item \textbf{Narrower notion of representation.} The notion of a representation
system used here is deliberately narrower than general Type-2 representations of
computable analysis; it suffices for the protocol questions considered but makes
no claim to their full generality.
\item \textbf{Classical foundations.} The main results are predominantly
reformulations of classical computability limits (undecidability of equality of
computable reals); the contribution lies in framing, transfer, and positioning,
not in a new undecidability theorem.
\item \textbf{Taxonomy not yet formalized.} The protocol-relative commitment
taxonomy sketched in Section~11 is a research program and not yet fully
formalized.
\item \textbf{No security analysis.} The toy schemes (XOR, toy RSA, toy hash)
serve exclusively to illustrate the representation question; no cryptographic
security analysis is carried out.
\item \textbf{Own case study.} The snaproot case study is the author's own work
and serves as illustration, not as independent external evidence.
\item \textbf{Specification relativity.} Canonicity is not an intrinsic property
of an abstract object but always relative to a normative syntax, normalization,
and encoding specification. Different specifications can fix different, each
internally canonical, byte encodings.
\item \textbf{No empirical evaluation.} The paper contains no empirical study of
competing implementations or real interoperability failures; the analysis is
conceptual and mathematical, not experimental.
\end{itemize}

\section{Outlook: Protocol-Relative Commitment Semantics}\label{sec:outlook}

As an outlook, not as a fully elaborated theory, we sketch a protocol-relative
extension of the representation perspective developed here; the following
considerations are not part of the formal main claims of this paper.

The distinction made in this paper between exact byte identity, canonical
serialization of structured objects, and canonical value presentation suggests a
more general protocol-relative perspective. Let $\Pi$ be a protocol and
$\mathrm{Obs}_{\Pi}$ the collection of the properties of a presentation $d$
observed by it. Two presentations could then count as
\emph{protocol-equivalent} if
\[
d_1 \equiv_{\Pi} d_2 \;\Leftrightarrow\;
\mathrm{Obs}_{\Pi}(d_1) = \mathrm{Obs}_{\Pi}(d_2).
\]

\paragraph{Example (measurement protocol).}
Let $L_{\mathrm{mes}}$ be the decidable language of normalized decimal strings
with exactly five decimal places. A protocol $\Pi$ first decodes
$d \in L_{\mathrm{mes}}$ as an exact rational fixed-point value
$\mathrm{parse}(d)$ and observes of it only its rounding to two decimal places
according to a normatively fixed rule --- here \emph{round half away from zero}
(in an exact half case, away from zero in absolute value). We have
\[
\mathrm{Obs}_{\Pi}(d) = \mathrm{round}_2(\mathrm{parse}(d)), \qquad
d_1 \equiv_{\Pi} d_2 \Leftrightarrow
\mathrm{round}_2(\mathrm{parse}(d_1)) = \mathrm{round}_2(\mathrm{parse}(d_2)).
\]
The associated computable normal form as a concrete code is
\[
N_{\Pi}(d) = \mathrm{format}_2(\mathrm{round}_2(\mathrm{parse}(d))),
\]
where $\mathrm{format}_2$ always produces exactly two decimal places (for
instance the string ``3.14''); the canonical target form per equivalence class
is thereby completely fixed as a finite code.

The exactly different inputs $3.14159$ and $3.14161$ are protocol-equivalent
under this $\Pi$ (both round to $3.14$), while $3.14000$ and $3.15000$ are not.
Since $\mathrm{parse}(d)$ yields, for $d \in L_{\mathrm{mes}}$, an exactly
represented rational fixed-point value, $\mathrm{round}_2$ is computable by
integer arithmetic; the protocol-relative equivalence is therefore decidable and
possesses a computable normal form --- unlike the universal exact value equality
of Theorem~\ref{thm:barrier}.

Decisive is the restriction to the decidable input language
$L_{\mathrm{mes}}$: on an unrestricted class of arbitrary approximation programs
for computable real numbers, a rounding normal form would not be totally
computable, since $\mathrm{round}_2$ is discontinuous at the rounding thresholds
($3.145$; $3.155$; \dots) and, from an arbitrary Cauchy name, it cannot be
decided in the limiting case on which side of the threshold the value lies. The
example thus demonstrates precisely the core thesis: computability arises
through the precise specification of input language and rounding rule, not
through the mere notion of rounding. It shows how a deliberately coarser,
protocol-defined equivalence remains well-defined on a fixed input class.

This equivalence need coincide neither with syntactic byte equality nor with
exact real value equality. In particular, it can be coarser than value equality
and thereby enable a canonical protocol-relative representation, even though a
universal exact value canonicalization is excluded by
Theorem~\ref{thm:barrier}. Precisely herein lies the most interesting starting
point of an independent protocol-relative theory.

A coarser protocol-relative equivalence weakens the requirements compared with
exact value equality, but by itself guarantees neither the decidability of the
equivalence nor the existence of a computable normal form; both must be
established separately for the respective protocol. Moreover, a
protocol-relative commitment binds, on the semantic level, only to the
equivalence class $[d]_{\Pi}$ distinguished by $\Pi$, not necessarily to a
single real value --- an observation central to the cryptographic binding
property.

From this, prospectively, a taxonomy emerges of \emph{representation
commitments} (commitment to the concrete description;
$N_{\mathrm{rep}}(d_1) = N_{\mathrm{rep}}(d_2) \Leftrightarrow d_1 = d_2$),
\emph{exact value commitments}
($N_{\mathrm{val}}(d_1) = N_{\mathrm{val}}(d_2) \Leftrightarrow
\sem{d_1} = \sem{d_2}$, i.e.\ exactly when the values are equal), and
\emph{protocol-relative commitments}
($N_{\Pi}(d_1) = N_{\Pi}(d_2) \Leftrightarrow d_1 \equiv_{\Pi} d_2$). Demanding
the respective equivalence in both directions ensures that the preprocessing
does not merge classes not provided for by the protocol. A complete theory would
additionally have to examine the compatibility of the equivalence with the
operations used by the protocol, the composition of canonically serializable
objects, and the demarcation from represented spaces, computable structures,
exact real arithmetic, and existing semantic commitment notions. This
elaboration is reserved for follow-up work.

The proposed taxonomy here concerns the semantics or canonicalization of the
input to be committed to. It makes no statement about determinism, hiding, or
binding of the commitment scheme used subsequently: classical schemes can be
randomized ($\mathrm{Commit}(m; r)$), so that even for identical input,
different random values produce different commitments. The taxonomy thus
describes how the semantic commitment input is determined from a presentation,
not the commitment scheme itself. Classical commitment
definitions~\cite{damgaard1999} presuppose an already fixed message space; the
perspective proposed here asks, upstream of that, which presentations count as
the same message or the same protocol-relevant equivalence class in the first
place.

\section{Discussion and Conclusion}

Operational cryptographic procedures process finite byte sequences. If
semantically equal objects are thereby to produce identical protocol bytes
independently of their initial representation (byte-input invariance), a
canonical representation or equivalent normalization must additionally be fixed.
The rational core presentation $(\Q, \SQ)$ satisfies this condition for rational
values; canonically serializable object classes transfer the same property to
practical protocol objects such as files, hashes, transaction IDs, and payloads.
Algorithmically approximable numbers $\Aapp$, by contrast, are appropriate for
analysis and estimation purposes, provided no byte-input-invariant processing is
demanded.

As a supplementary note: models such as the random oracle
model~\cite{bellarerogaway1993} presuppose already encoded bit strings as
inputs. The upstream mapping of semantic objects to these bit strings is
presupposed by the model but not itself fixed. The perspective presented here
lies upstream of it and treats exactly this representation layer --- an
operational view complementary to the idealization, which the model presupposes
but does not itself solve.

Beyond this, the correctness proof presented in this paper shows a more specific
advantage: a canonical presentation is not a prerequisite of mathematical
provability as such --- formal proofs can also be conducted via Cauchy names,
intervals, or other representations of real numbers. A canonical presentation is
a sufficient and practically particularly suitable way of realizing
representation-invariant processing; it is required in particular when
semantically equal objects are to be represented by identical canonical byte
inputs (byte-input invariance). It is, however, not necessary in general:
representation invariance can also be achieved by directly extensional
processing that is well-defined on equivalence classes --- for instance, by a
decoder accepting several notations and subsequently operating on the decoded
value without first computing a canonical normal form. A function on bytes is
uniquely specified even without canonicity; canonicity is needed so that
semantically equal objects receive the same canonical byte input independently
of their initial representation. In this sense, the rational core presentation,
or a canonically serializable representation, enables byte-level correctness
arguments that are reproducible independently of the implementation for an
identical byte sequence.

Beyond the toy examples considered here, the snaproot hash-anchoring protocol
shows that the more general separation between semantic objects and their
operational byte representations is also relevant for real hashing and
verification systems. In such contexts, algorithms operate on syntactic
representations, not on abstract mathematical objects. The use of fixed
canonical or protocol-conformant finite representations therefore makes it
possible to formulate protocols in such a way that their correctness, once the
representation specification has been fixed, is formally verifiable
independently of further, non-normalized implementation decisions.

The contribution of this paper thus lies neither in new cryptographic primitives
nor in a new undecidability theorem, but in a representation-theoretic framework
for cryptographic protocol objects: in the explicit separation between
approximable analysis objects ($\Aapp$) and canonically serializable operational
protocol objects (rational core presentation or protocol-conformant
representations) --- and in the observation that this separation forms a basis
for byte-input-invariant protocol processing and for byte-level correctness
arguments reproducible in all specification-conformant implementations. A
protocol that commits to concrete byte sequences can be well-defined even
without semantic canonicalization. Put pointedly, the contribution lies in the
protocol-theoretic reformulation of classical computability limits and their
connection with canonical serialization and commitment semantics.

For cryptographic protocols it is therefore decisive not only which mathematical
or semantic value is meant, but through which normative presentation it becomes
a reproducible operational object.

\section*{Statement on the Use of AI Tools}

In the preparation of this paper, AI language models were used as tools, in
particular Anthropic Claude Opus~4.8 and Claude Fable~5 as well as OpenAI
GPT-5.5 Thinking. They served for linguistic revision, critical review of the
argumentation, checking of the worked examples, and the translation of this
English version from the German original. The responsibility for the content,
all conceptual decisions, and the final version lie entirely with the author.


\appendix
\section{Detailed Calculations (Complete)}\label{app:hash}

\subsection{Symmetric Encryption (Complete Steps)}
\begin{center}
\begin{tabular}{l}
Key $k = 13/37 \in \Aex$\\
$\mathrm{pair}_{\mathrm{toy}}(13/37) = (p, q) = (13, 37)$\\
$s_i = (p + 31 \cdot q + i) \bmod 256$\\
$p + 31 \cdot q = 1160$\\
$S$ (decimal) = $[136, 137, 138, 139, 140]$\\
$S$ (hex) = \texttt{[88, 89, 8A, 8B, 8C]}\\
$P$ = ``HALLO'' (ASCII) = \texttt{[48, 41, 4C, 4C, 4F]}\\
$C = P \oplus S$:\\
\texttt{48$\oplus$88=C0, 41$\oplus$89=C8, 4C$\oplus$8A=C6, 4C$\oplus$8B=C7,
4F$\oplus$8C=C3}\\
$C$ (hex) = \texttt{[C0, C8, C6, C7, C3]}\\
Back-calculation: $P = C \oplus S$
\end{tabular}
\end{center}

\subsection{RSA (Modular Powers via Square-and-Multiply)}
Parameters: $p = 11$, $q = 17$, $n = 187$, $\varphi(n) = 160$, $e = 7$,
$d = 23$, $m = 42$.
\begin{center}
\begin{tabular}{l}
Encryption: $c = 42^7 \bmod 187$\\
$42^2 = 1764 \bmod 187 = 81$\\
$42^4 = 81^2 = 6561 \bmod 187 = 16$\\
$42^7 = 42^4 \cdot 42^2 \cdot 42 = 16 \cdot 81 \cdot 42 \bmod 187$\\
$16 \cdot 81 = 1296 \bmod 187 = 174$\\
$174 \cdot 42 = 7308 \bmod 187 = 15 \Rightarrow c = 15$\\[4pt]
Decryption: $m = 15^{23} \bmod 187$\\
$15^2 = 225 \bmod 187 = 38$\\
$15^4 = 38^2 = 1444 \bmod 187 = 135$\\
$15^8 = 135^2 = 18225 \bmod 187 = 86$\\
$15^{16} = 86^2 = 7396 \bmod 187 = 103$\\
$23 = 16 + 4 + 2 + 1$\\
$15^{23} = 103 \cdot 135 \cdot 38 \cdot 15 \bmod 187$\\
$103 \cdot 135 = 13905 \bmod 187 = 67$\\
$67 \cdot 38 = 2546 \bmod 187 = 115$\\
$115 \cdot 15 = 1725 \bmod 187 = 42 \Rightarrow m = 42$
\end{tabular}
\end{center}

\subsection{Toy Hash (Complete Sums)}
\[
H(x) = \sum_{i=0}^{L-1} (i+1) \cdot x_i \bmod 256
\]
\begin{center}
\begin{tabular}{l}
$k = 13/37$, $\mathrm{bytes}_{\mathrm{toy}}(13/37) = [13, 37]$:\\
$H = (1 \cdot 13 + 2 \cdot 37) \bmod 256 = 87$\\[4pt]
$k = \pi$, ``3.14'' $\rightarrow [51, 46, 49, 52]$:\\
$H = 51 + 92 + 147 + 208 = 498 \bmod 256 = 242$\\[4pt]
$k = \pi$, ``3.1416'' $\rightarrow [51, 46, 49, 52, 49, 54]$:\\
$H = 1067 \bmod 256 = 43$
\end{tabular}
\end{center}
Full SHA-256 values for Section~\ref{sec:sha} (hexadecimal, 256 bits, each one
contiguous value):
\begin{center}
\footnotesize
\begin{tabular}{l}
SHA-256(``3.14'') =\\
\texttt{2efff1261c25d94dd6698ea1047f5c0a7107ca98b0a6c2427ee6614143500215}\\[3pt]
SHA-256(``3.1416'') =\\
\texttt{23741492d396d4d556f0982fbcc97870d7b0e61859d7d7619d8418f875a42602}\\[3pt]
SHA-256($[13, 37]$) =\\
\texttt{51698d9afde54286cf8875ec5c4163a3b435e5f07e599ce20349fb342f94f015}
\end{tabular}
\end{center}

\subsection{Note on the Term ``Toy''}
The term ``toy'' denotes, in this context, deliberately simplified, insecure
example constructions that serve exclusively to illustrate structural and
conceptual properties. In particular, no statement about cryptographic security
is made. This is a common convention in cryptography and theoretical computer
science.

\end{document}